\definecolor{refkey}{rgb}{0,0.08,0.65}
\definecolor{labelkey}{rgb}{0.8,0,0}
\newcommand{\res}{\mathrm{Res}}
\newcommand{\pa}{\partial}
\newcommand{\bt}{\mathbf{t}}
\newcommand{\by}{\mathbf{y}}
\newcommand{\btt}{\mathbf{t'}}
\newcommand{\ld}{\lambda}
\newcommand{\ta}{\tau}
\newcommand{\sg}{\sigma}
\newcommand{\defeq}{\stackrel{\mathrm{def}}{=}}
\newtheorem{lm}{Lemma}
\newtheorem{rmk}{Remark}
\newtheorem{thm}{Theorem}
\newtheorem{prop}{Proposition}
\newtheorem{eg}{Example}
\begin{document}

\begin{center}
\section*{Bilinear Identities and Hirota's Bilinear Forms for an Extended Kadomtsev-Petviashvili
  Hierarchy}

\noindent
{Runliang Lin$^1$, Xiaojun Liu$^2$ and Yunbo Zeng$^1$  }

\end{center}

\noindent
{\small {\it $^1$Department of Mathematical Sciences, Tsinghua
University, Beijing 100084, P.R. China,
e-mail: }\texttt{rlin@math.tsinghua.edu.cn}{\it ;} \texttt{yzeng@math.tsinghua.edu.cn}}

\noindent
{\small {\it $^2$Department of Applied Mathematics,
  China Agricultural University,  Beijing 100083, P.R. China,
e-mail: }\texttt{xjliu@cau.edu.cn}}

\vskip 12pt { \small\noindent\bf Abstract.}
 {In this paper, we construct the
 bilinear identities for the wave functions of an extended Kadomtsev-Petviashvili (KP)
  hierarchy, which is the KP hierarchy with particular extended flows
  (2008, Phys. Lett. A, 372: 3819).
  By introducing an auxiliary
  parameter (denoted by $z$), whose flow corresponds to the so-called squared eigenfunction symmetry of KP
  hierarchy, we find the tau-function
  for this extended KP hierarchy.
  It is  shown that the bilinear
  identities will generate all the Hirota's bilinear equations for the zero-curvature forms of
  the extended KP hierarchy, which includes two types of KP equation with self-consistent sources (KPSCS).
  It seems that the Hirota's bilinear equations obtained in this paper for KPSCS
  are in a simpler form by comparing with the results by Hu and Wang (2007, Inverse Problems, 23: 1433). }\\

\noindent{\small\bf Mathematics Subject Classifications (2010):} 37K10.
\\

\noindent {\small\bf Key words:} KP hierarchy, bilinear identity,
tau-function, Hirota's bilinear forms.

\section{Introduction}
Sato theory has fundamental importance in the study of integrable systems (see \cite{DateMiwa1983}
and references therein). It reveals the infinite dimensional Grassmannian structure of space of
$\ta$-functions, where the $\ta$-functions are solutions for the Hirota's bilinear form of
Kadomtsev-Petviashvili (KP) hierarchy. The key point to this important discovery is a bilinear
residue identity for wave functions called \emph{bilinear identity}. Bilinear identity plays an
important role in the proof of existence for $\ta$-function.
 It also serves as the generating function
of the Hirota's bilinear equations for KP hierarchy \cite{Cheng1994,LorisWillox1997,ShenTu2011}.  In
this paper, we will construct the bilinear identity for the extended KP hierarchy \cite{Liu2008a}.

The study of integrable generalization is an important subject in the study of integrable
systems. Several approaches for the generalizations have been developed, e.g., constructing new
flows to \emph{extend} original systems. There are many ways to introduce new flows to make a new
compatible integrable system. In \cite{Xiong1992}, the KP hierarchy is extended by properly
combining additional flows. In \cite{Kamata2002}, extension of KP hierarchy is formulated by
introducing fractional order pseudo-differential operators. In \cite{Dimakis2004a,Dimakis2004b},
Dimakis and Muller-Hoissen extended the Moyal-deformed hierarchies by including additional evolution
equations with respect to the deformation parameters. In \cite{Carlet2003}, Carlet, Dubrovin and
Zhang defined a logarithm of the difference Lax operator and got the extended (2+1)D Toda lattice
hierarchy. Later, the Hirota's bilinear formalism and the relations of extended (2+1)D Toda lattice
hierarchy and extended 1D Toda lattice hierarchy have been
studied \cite{Milanov2007,Takasaki2010}. In
\cite{Li2010}, Li, {\it et al}, studied the $\ta$-functions and bilinear identities for the extended
bi-graded Toda lattice hierarchy.

Recently, we proposed a different and general approach to extend ($2+1$)-dimensional integrable
hierarchies by using the symmetries generating functions (or the squared eigenfunction symmetries)
\cite{Liu2008a}. This kind of extended KP hierarchy can be thought as the generalizations of the KP
hierarchy by squared eigenfunction symmetries (or \emph{ghost flows})
\cite{Aratyn1998,Oevel1993,Oevel1993a,Oevel1994}.  The reason for constructing such kind of extended
KP hierarchy is that it includes two types of KP equation with self-consistent sources (KPSCS-I and
KPSCS-II), which has important applications in hydrodynamics, plasma physics and solid state physics
(see, e.g., \cite{Doktorov-Shchesnovich-1995,Grinevich-Taimanov-2008,Hu2007,Lin2001,Lin2006,
  Melnikov1983,Melnikov1984,Melnikov1987,Melnikov1989,Zeng2000,Zhang2003}).
The extended KP
hierarchy also includes the well-known $k$-constrained KP hierarchy
\cite{Cheng1995,Konopelchenko1991,Konopelchenko1992} and several (1+1)-dimensional solition
equations with sources as reductions. It has been shown that many (2+1)-dimensional integrable
systems can be extended in this way \cite{Huang2011,Lin2008,Liu2008,Ma2010,Yao2009}.  We also
proposed a generalized dressing approach to construct \emph{Wronskian} type solutions (including
soliton solutions) for the extended hierarchies \cite{Huang2011,Lin2010,Liu2009,Liu2010,Yao2009}.

KPSCS-I and KPSCS-II can also be written in Hirota bilinear forms. In \cite{Hu2006,Wang2007}, Hu,
Wang, {\it et al}, proposed a \emph{source generation procedure} to construct the Hirota bilinear
form for KPSCS (I and II) on the basis of the bilinear form for the original KP equation.

However, it turns out that it is not an easy task to find Hirota bilinear form for KPSCS (I and
II). More generally, it remains unsolved to give the Hirota bilinear form for every
zero-curvature equation in the extended KP hierarchy. The most natural way to solve this problem is
to consider the bilinear identities of the extended KP hierarchy. Because of the importance of the
bilinear identities in Sato theory, the existence of bilinear identities could be an important open
question for the extended KP hierarchy.

In this paper, we will construct the bilinear identities for the extended KP hierarchy. Moreover, it will
serve as a generating function of all the Hirota bilinear forms for the zero-curvature equations in the
extended KP hierarchy. In particular, it generates Hirota bilinear forms for KPSCS (both type I and
II). The squared eigenfunction symmetry plays a crucial role in our construction.  It seems that the
Hirota's bilinear equations obtained in this paper for KPSCS are in a simpler form by comparing with
the results by Hu and Wang \cite{Hu2007}.

This paper is organized as follows. In Section \ref{sec:bi-KP-sq}, the bilinear identities for KP
hierarchy with a squared eigenfunction symmetry are constructed. In Section \ref{sec:bi-exKP}, by
making the squared eigenfunction symmetry as an auxiliary flow, we construct the bilinear identities
for the extended KP hierarchy. In addition, we prove any wave function satisfying bilinear
identities will be a wave function for the extended KP hierarchy. In Section \ref{sec:tau-exKPH}, we
find the $\ta$-function for the extended KP hierarchy. We also find the generation functions for
Hirota bilinear form for the extended KP hierarchy. Subsequently,
several examples including the Hirota
bilinear form for the KPSCS (I and II) and a higher order system in the hierarchy are
given. In Section \ref{sec:Back-nonlinear}, we show how to go back from Hirota bilinear forms to
nonlinear partial differential equations (PDEs) for KPSCS, which verifies the correctness of our
construction. In the last section, we will give conclusion and remarks, and discuss some problems
for further exploration.

\section{Bilinear Identities for KP hierarchy with the Squared Eigenfunction Symmetry}
\label{sec:bi-KP-sq}
Consider the system given in
\cite{Oevel1994}, defined by assuming that $L=\pa + \sum_{i=1}^\infty
u_i\pa^{-i}$ satisfies both the ordinary KP flows
\begin{subequations}\label{eqs:SE}
\begin{equation}
  \label{eq:KP}
  \pa_{t_n} L=[L^n_+,L],
\end{equation}
and a new $\pa_z$-flow given by
\begin{equation}
  \label{eq:sqs}
  \pa_z L = [q\pa^{-1} r, L].
\end{equation}
Here the Lax operator $L$ is a pseudo-differential operator (see \cite{dickey2003soliton} for a
detailed introduction). The ``$+$'' sign in subscript part of $L^n_+$ indicates the projection to
the non-negative part of $L^n$ with respect to the power of $\pa$. There are a series of time
variables $t_n$, and we can set $x=t_1$ according to (\ref{eq:KP}).

A sufficient condition for the commutativity of $\pa_z$- and $\pa_{t_n}$-flow is obtained by imposing
$q$ and $r$ to be an eigenfunction and an adjoint eigenfunction of KP hierarchy, respectively. Namely, we
should assume
\begin{align}
  \pa_{t_n} q &= L^n_+(q), \label{eq:eigenfun}\\
  \pa_{t_n} r &= - (L^n_+)^*(r), \label{eq:adj-eigenfun}
\end{align}
for all $n\in\mathbb{N}$. The $\pa_z$-flow describes a symmetry for KP hierarchy, which is called
the \emph{squared eigenfunction symmetry} \cite{Oevel1994} or a \emph{ghost flow}
\cite{Aratyn1998,Aratyn1999}.
\end{subequations}

It is well-known that dressing operator $W=\sum_{i\ge 0}w_i\pa^{-i}$ ($w_0=1$) plays a very
important role in KP theory. Here $w_i$ ($i>0$) are functions of all $t_n$ and $z$. Suppose the
$t_n$-evolution equations for $W$ are given by
\begin{equation}
  \pa_{t_n}W = -\left(W\pa^n W^{-1}\right)_- W, \qquad (n\in\mathbb{N}).\label{eqs:W-SE}
\end{equation}
Then Lax operator $L$ is related to the dressing operator $W$ by $L=W\pa W^{-1}$. We define the wave
and adjoint wave function as
\begin{subequations}\label{eqs:wave-func}
\begin{align}
  w(z,\bt,\ld) &= W e^{\xi(\bt,\ld)},\\
  w^*(z,\bt,\ld) &= (W^*)^{-1} e^{-\xi(\bt,\ld)}
\end{align}
\end{subequations}
where $\xi(\bt,\ld)=\sum_{i\ge 1}t_i\ld^i$ and $\bt=(t_1,t_2,...)$.
At this moment, we temporally ignore the parameter
$z$. Then the original KP theory states that a residue identity, called \emph{bilinear identity}
(\ref{eq:usualBI}), holds for (adjoint) wave functions:
\begin{equation}\label{eq:usualBI}
  \res_\ld\; w(z,\bt,\ld) \; w^*(z,\bt',\ld) =0,
\end{equation}
where $\btt=(t_1',t_2',...)$ and the residue of $\ld$ can be simply considered as the coefficient of
$\ld^{-1}$ in the Laurent expansion. Since we have ignored the parameter $z$ in $w$ and $w^*$, the
proof of (\ref{eq:usualBI}) is the same as the original one given in \cite{DateMiwa1983}. Notice
that the parameter $z$ takes the same value for $w$ and $w^*$ in (\ref{eq:usualBI}). The key step in
the proof of (\ref{eq:usualBI}) is an important lemma. We recall it here \cite{dickey2003soliton}:
\begin{lm}\label{lm:key}
  \begin{equation}
    \label{eq:lemma}
    \res_\pa \; P\cdot Q^* = \res_\ld \; P (e^{\xi(\bt,\ld)})\cdot Q (e^{-\xi(\bt,\ld)}),
  \end{equation}
  where $P$ and $Q$ are pseudo-differential operators. The residue with respect to $\pa$ on the left
  hand side of (\ref{eq:lemma}) is defined as the coefficient of $\pa^{-1}$ for a
  pseudo-differential operator.
\end{lm}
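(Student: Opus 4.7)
The plan is to reduce the identity to the case of monomial pseudo-differential operators by bilinearity, and then to verify both sides by a direct residue computation. Both sides of (\ref{eq:lemma}) are bilinear in the pair $(P, Q)$: formal adjunction and operator composition are linear on the left, and the action on exponentials is linear on the right. Expanding $P = \sum_i p_i \pa^i$ and $Q = \sum_j q_j \pa^j$ as formal series, I would therefore reduce to verifying the identity in the special case $P = a\,\pa^m$, $Q = b\,\pa^n$ for arbitrary functions $a,b$ and integers $m,n$.

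In this monomial case the right hand side of (\ref{eq:lemma}) is immediate: using $\pa^k e^{\pm \xi(\bt,\ld)} = (\pm \ld)^k e^{\pm \xi(\bt,\ld)}$ for all $k\in\mb Z$ (valid formally because $\pa_x\xi = \ld$), one obtains $P(e^{\xi(\bt,\ld)})\cdot Q(e^{-\xi(\bt,\ld)}) = (-1)^n ab\,\ld^{m+n}$, so $\res_\ld$ equals $(-1)^n ab$ when $m+n=-1$ and vanishes otherwise. For the left hand side I would compute $Q^* = (-1)^n \pa^n b$, combine to get $P Q^* = (-1)^n a\,\pa^{m+n}\,b$, and then apply the generalized Leibniz rule $\pa^N b = \sum_{k\ge 0}\binom{N}{k} b^{(k)} \pa^{N-k}$ (valid for any $N\in\mb Z$ with $\binom{N}{k} := N(N-1)\cdots(N-k+1)/k!$) to extract the coefficient of $\pa^{-1}$, which is $(-1)^n a \binom{m+n}{m+n+1} b^{(m+n+1)}$.

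The proof then concludes by noting that $\binom{m+n}{m+n+1}$ vanishes for every integer value of $m+n$ except $m+n=-1$, in which case it equals $\binom{-1}{0}=1$ and $b^{(0)}=b$; both sides therefore agree. I expect the main technical subtlety to lie in the bilinearity reduction itself, since $P$ and $Q$ may involve infinitely many negative powers of $\pa$ and the formal sums must be manipulated inside the residues. This is not a serious obstacle because, for any fixed total degree $m+n$, only finitely many monomial pairs contribute to the coefficient of $\pa^{-1}$ (respectively $\ld^{-1}$), so the identity passes from monomials to arbitrary pseudo-differential operators by a standard formal argument in the algebra of such operators.
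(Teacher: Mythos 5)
Your proof is correct, and it is essentially the standard argument: the paper does not actually prove Lemma \ref{lm:key} but recalls it from \cite{dickey2003soliton}, where precisely your bilinearity reduction to monomials $P=a\pa^m$, $Q=b\pa^n$ is carried out (adjoint $(b\pa^n)^*=(-1)^n\pa^n\circ b$, generalized Leibniz rule, and the vanishing of $\binom{m+n}{m+n+1}$ away from $m+n=-1$), together with the same finiteness observation that justifies the formal summation. The only point to phrase carefully is the case $m+n\le -2$, where no $\pa^{-1}$ term occurs in the Leibniz expansion at all (there is no $k\ge 0$ with $(m+n)-k=-1$), so your closed formula for the coefficient should be read with the convention $\binom{N}{k}=0$ for $k<0$; with that understood, both sides agree in all cases.
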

Actually, it is not difficult to prove that the bilinear identity (\ref{eq:usualBI}) is equivalent
to the KP hierarchy (\ref{eq:KP}) (see \cite{dickey2003soliton} for the proof). Namely, if we start
with (\ref{eq:KP}) or (\ref{eqs:W-SE}), we can prove that wave functions defined by
(\ref{eqs:wave-func}) will satisfy the bilinear identity (\ref{eq:usualBI}). Conversely, if we have
wave and adjoint wave functions satisfying bilinear identity (\ref{eq:usualBI}), then we can
immediately know that the dressing operator $W$ corresponding to the wave function $w$ will satisfy
(\ref{eqs:W-SE}), which makes the operator $L$ defined by dressing form $W\pa W^{-1}$ as a Lax operator
for KP hierarchy.

Then, if we consider KP hierarchy with squared eigenfunction symmetry,
we can see that
the dressing
operator should satisfy another equation:
\begin{equation}
  \label{eq:W-z}
  \pa_z W = q\pa^{-1} r W.
\end{equation}
In this case, we have:
\begin{thm}\label{thm:KP-SES}
  The KP hierarchy with a squared eigenfunction symmetry (\ref{eqs:SE}) is equivalent to the
  following residue identities
  \begin{subequations}\label{eqs:BI-SE}
    \begin{align}
      & \res_\ld\; w(z,\bt,\ld) \ \cdot\  w^*(z,\btt,\ld)=0, \label{eq:BI-ww*}\\
      & \res_\ld\; w_z(z,\bt,\ld) \ \cdot\  w^*(z,\btt,\ld)=q(z,\bt)r(z,\btt), \label{eq:BI-w_z-w*} \\
      & \res_\ld\; w(z,\bt,\ld)  \ \cdot\  \pa^{-1}\left(q(z,\btt) w^*(z,\btt,\ld)\right)=-q(z,\bt),\label{eq:BI-ztn-q}\\
      & \res_\ld\; \pa^{-1}\left(r(z,\bt) w(z,\bt,\ld)\right) \ \cdot\  w^*(z,\btt,\ld)=r(z,\btt), \label{eq:BI-ztn-r}
    \end{align}
  \end{subequations}
  where the inverse of $\pa$ is understood as pseudo-differential operator acting
  on an exponential function, e.g., $\pa^{-1}\left(r w\right)=(\pa^{-1}r W) (e^{\xi(\bt,\ld)})$.
\end{thm}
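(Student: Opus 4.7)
\emph{Overview.} The statement is an equivalence, so I will prove each direction separately, using Lemma~\ref{lm:key} throughout to convert residues in $\ld$ into residues in $\pa$ of pseudo-differential operators.

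\emph{Forward direction.} Identity (\ref{eq:BI-ww*}) is the classical KP bilinear identity, proved as in \cite{dickey2003soliton} from (\ref{eqs:W-SE}); since $w$ and $w^*$ share the same $z$, the $\pa_z$-flow is passive and the usual proof applies verbatim. For the three new identities the key input will be $\pa_z W = q\pa^{-1}rW$ from (\ref{eq:W-z}), which yields $\pa_z w = q\,\pa^{-1}(rw)$ at the level of wave functions. The plan is to verify each identity first on the diagonal $\bt=\btt$ by direct application of Lemma~\ref{lm:key}: $\res_\pa(\pa^{-1}rW)\cdot W^{-1}=\res_\pa\pa^{-1}r=r$ handles (\ref{eq:BI-ztn-r}); $\res_\pa W\cdot(-W^{-1}q\pa^{-1})=-q$, using $(\pa^{-1}q(W^*)^{-1})^{*}=-W^{-1}q\pa^{-1}$, handles (\ref{eq:BI-ztn-q}); and $\res_\pa(q\pa^{-1}rW)\cdot W^{-1}=\res_\pa q\pa^{-1}r=qr$ handles (\ref{eq:BI-w_z-w*}).

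\emph{Off-diagonal extension.} To promote each diagonal identity to arbitrary $(\bt,\btt)$, I intend to fix $\bt$ and differentiate in the components of $\btt$. Using $\pa_{t_n'}w^*(\btt,\ld)=-(L^n_+(\btt))^{*}w^*(\btt,\ld)$ together with the (adjoint) eigenfunction equations $\pa_{t_n'}q(\btt)=L^n_+(\btt)(q(\btt))$ and $\pa_{t_n'}r(\btt)=-(L^n_+(\btt))^{*}r(\btt)$, both sides of (\ref{eq:BI-ztn-r}) and (\ref{eq:BI-ztn-q}) will obey the same evolution in $\btt$; combined with the diagonal match this forces equality for all $(\bt,\btt)$. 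Identity (\ref{eq:BI-w_z-w*}) then follows from (\ref{eq:BI-ztn-r}) by substituting $w_z=q\,\pa^{-1}(rw)$ on its left-hand side.

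\emph{Reverse direction and main obstacle.} Assuming (\ref{eqs:BI-SE}), the classical Dickey argument applied to (\ref{eq:BI-ww*}) recovers (\ref{eqs:W-SE}) and hence (\ref{eq:KP}). The new content is to extract the $\pa_z$-flow from (\ref{eq:BI-w_z-w*}): via Lemma~\ref{lm:key}, varying $\btt$ in the residue pairing determines every coefficient of the pseudo-differential operator $(\pa_z W)W^{-1}$, and matching with $q\pa^{-1}r$ forces $\pa_z W=q\pa^{-1}rW$, equivalent to (\ref{eq:sqs}); analogously, (\ref{eq:BI-ztn-q}) and (\ref{eq:BI-ztn-r}) will yield (\ref{eq:eigenfun}) and (\ref{eq:adj-eigenfun}). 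The principal obstacle throughout is precisely this promotion from the diagonal $\bt=\btt$, where Lemma~\ref{lm:key} directly applies, to the full $(\bt,\btt)$-plane: the reverse direction is the more delicate, since one must reconstruct a full pseudo-differential operator from its residue pairings with wave functions rather than just its $\pa^{-1}$ coefficient, and this reconstruction is what ultimately encodes the entire extended hierarchy in the bilinear identities.
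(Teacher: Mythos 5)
Your diagonal computations are correct and agree with the paper's (they are the $m_i=0$ case of what the paper proves), and your reverse direction is essentially the paper's own argument. The genuine gap is in the forward off-diagonal promotion. You argue: both sides obey the same evolution in $\btt$, they agree at $\btt=\bt$, hence they agree everywhere. This inference fails, because the $n=1$ member of the family $\pa_{t'_n}w^*=-(L^n_+)^*w^*$ is the tautology $\pa_{t'_1}w^*=\pa_{x'}w^*$ (since $x'=t'_1$): the linear system only determines a solution from its full initial profile in $x'$, while the diagonal supplies its value at the single point $t'_1=t_1$ for each fixed $\bt$, leaving all pure $t'_1$-derivatives at the diagonal unconstrained. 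Concretely, take the trivial solution $W=1$, $L=\pa$, and for $\mu\neq\nu$ set $H(\bt,\btt)=e^{\xi(\bt,\mu)-\xi(\btt,\mu)}-e^{\xi(\bt,\nu)-\xi(\btt,\nu)}$: for each fixed $\bt$ this satisfies $\pa_{t'_n}H=-(L^n_+)^*H$ for every $n\ge 1$ and vanishes identically on the diagonal $\btt=\bt$, yet $H\not\equiv0$. So ``same evolution plus diagonal match'' cannot force equality. (A secondary issue of the same kind: for (\ref{eq:BI-ztn-q}) even your premise that $\pa^{-1}\left(q(z,\btt)w^*(z,\btt,\ld)\right)$ obeys the adjoint evolution is a nontrivial statement about the integration constant of the squared eigenfunction potential, and must be derived from the operator definition $\pa^{-1}q(W^*)^{-1}$ rather than assumed.)

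The paper closes exactly this hole by verifying \emph{all} mixed $\btt$-Taylor coefficients at the diagonal instead of invoking uniqueness of an evolution. Since $w^*$ and $r$ satisfy the same adjoint linear equations, every mixed derivative $\pa_{t_1}^{m_1}\cdots\pa_{t_k}^{m_k}$ acts on $w^*$ as $P_{m_1\cdots m_k}w^*$ and on $r$ as $P_{m_1\cdots m_k}(r)$ with one and the same differential operator $P_{m_1\cdots m_k}$ in $\pa$ --- and crucially this family includes arbitrary powers of $\pa_{t_1}$, which is precisely what your scheme misses. Lemma \ref{lm:key} then gives $\res_\ld\, w_z\cdot P_{m_1\cdots m_k}w^*=\res_\pa\, q\pa^{-1}r\,P^*_{m_1\cdots m_k}=q\,P_{m_1\cdots m_k}(r)$, and (\ref{eq:BI-w_z-w*}) follows by Taylor expansion in $\btt$ about $\bt$; the identities (\ref{eq:BI-ztn-q}) and (\ref{eq:BI-ztn-r}) are then obtained by the substitutions $w_z=q\,\pa^{-1}(rw)$ and $w_z^*=-r\,\pa^{-1}(qw^*)$ (the reverse of your substitution order, which is equally fine once the full identity is in hand). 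Your evolution observation is exactly the source of the operators $P_{m_1\cdots m_k}$, so the repair is local: apply it coefficientwise at the diagonal, for all mixed derivatives including those in $t_1$, rather than as an abstract uniqueness claim.
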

\begin{proof}
  Frist, we will prove the residue identities (\ref{eqs:BI-SE}) from (\ref{eqs:SE}),
  (\ref{eqs:W-SE}) and (\ref{eq:W-z}).  Equation (\ref{eq:BI-ww*}) is easy to prove, since $L$
  satisfies the evolution equations of original KP hierarchy.

  To prove (\ref{eq:BI-w_z-w*}), we only need to show
  \begin{displaymath}
    \res_\ld \; w_z(z,\bt,\ld) \, \cdot (\pa_{t_1}^{m_1}\cdots \pa_{t_k}^{m_k} w^*(z,\bt,\ld)) = q(z,\bt) \,
    \pa_{t_1}^{m_1}\cdots \pa_{t_k}^{m_k} r(z,\bt)
  \end{displaymath}
  for arbitrary $k\ge 1$ and $m_i\ge 0$. Here we have a key observation: the action of mixed partial
  derivatives $\pa_{t_1}^{m_1}\cdots\pa_{t_k}^{m_k}$ on $w^*$ can be written as $P_{m_1 \cdots m_k}
  w^*(z,\bt,\ld)$,
and
  \begin{displaymath}
    \pa_{t_1}^{m_1}\cdots\pa_{t_k}^{m_k}r(z,\bt) = P_{m_1\cdots m_k}\left(r(z,\bt)\right),
  \end{displaymath}
  where $P_{m_1\cdots m_k}$ is a differential operator in $\partial$. Then by noticing
  $w_z(z,\bt,\ld) = q \pa^{-1} r w(z,\bt,\ld)$ and Lemma \ref{lm:key}, we have
  \begin{align*}
    &\res_\ld \; w_z(z,\bt,\ld)\, \pa_{t_1}^{m_1}\cdots \pa_{t_k}^{m_k} w^*(z,\bt,\ld) \\
    =& \res_\ld \; q(z,\bt) \pa^{-1} r(z,\bt) w(z,\bt,\ld)\cdot P_{m_1\cdots m_k} w^*(z,\bt,\ld)\\
    =& \res_\pa \; q(z,\bt) \pa^{-1} r(z,\bt) P_{m_1\cdots m_k}^* \\
    =& q(z,\bt) P_{m_1\cdots m_k}(r(z,\bt)).
  \end{align*}
  Analogously, we have another residue identity:
  \begin{equation}\label{eq:dual-BI-exp}
    \res_\ld \; w(z,\bt,\ld)\, w_z^*(z,\btt,\ld) = q(z,\bt) r(z,\btt).
  \end{equation}
  However this equation is equivalent to (\ref{eq:BI-w_z-w*}), so this equation is not included in
  (\ref{eqs:BI-SE}).

  The residue identities (\ref{eq:BI-ztn-q}) and (\ref{eq:BI-ztn-r}) can be easily derived from
  (\ref{eq:BI-w_z-w*}) (and (\ref{eq:dual-BI-exp})) if one substitutes $w_z = q\pa^{-1}r w$ and
  $w_z^* = - r\pa^{-1} q w^*$ into (\ref{eq:BI-w_z-w*}) (or (\ref{eq:dual-BI-exp})) and eliminate $q$ or $r$.

The proof for the reverse part in this theorem is written as
the following proposition.
\end{proof}

\begin{prop}
  If there are functions $q(z,\bt)$, $r(z,\bt)$,
  \begin{displaymath}
    w(z,\bt,\ld) = \bigg(1+\sum_{i\ge 1}w_i\ld^{-i}\bigg) e^{\xi(\bt,\ld)} \quad\text{and}\quad
    w^*(z,\bt,\ld) = \bigg(1+\sum_{i\ge 1} w_i^*\ld^{-i}\bigg) e^{-\xi(\bt,\ld)}
  \end{displaymath}
  satisfying the identities (\ref{eqs:BI-SE}), then there exists a pseudo-differential operator $W$,
  and Lax operator $L:=W\pa W^{-1}$ such that $L$, $q$ and $r$ satisfy (\ref{eqs:SE}).
\end{prop}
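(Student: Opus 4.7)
The plan is to recover each of the four equations in (\ref{eqs:SE}) from one of the four bilinear identities in (\ref{eqs:BI-SE}). I read off $W = 1 + \sum_i w_i \pa^{-i}$ from the given expansion of $w$, set $L := W\pa W^{-1}$, and treat $z$ throughout as a passive parameter on which the coefficients depend.

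Since $z$ enters (\ref{eq:BI-ww*}) only parametrically, this residue identity is literally the classical KP bilinear identity, so the standard Sato theory (as in \cite{dickey2003soliton}) applies: it forces $w^* = (W^*)^{-1} e^{-\xi}$, produces the Sato equation (\ref{eqs:W-SE}) for $W$, and hence the KP flow (\ref{eq:KP}) for $L$. Crucially for what follows, it also yields the evolutions $\pa_{t_n} w = L^n_+ w$ and $\pa_{t_n} w^* = -(L^n_+)^* w^*$.

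To extract (\ref{eq:eigenfun}) from (\ref{eq:BI-ztn-q}), I differentiate once in $t'_n$ and specialise $\btt = \bt$: the right-hand side vanishes because $q(z,\bt)$ has no $\btt$-dependence, while on the left-hand side $\pa_{t'_n} w^*$ is replaced by $-(L^n_+)^* w^*$ using the previous step. Packaging the integrand as a pseudo-differential operator acting on $e^{-\xi}$ and applying Lemma \ref{lm:key}, the standard simplifications---$(\pa^{-1})^* = -\pa^{-1}$, $WW^{-1} = 1$, and the residue identity $\res_\pa L^n_+ q \pa^{-1} = L^n_+(q)$---collapse the equation to $\pa_{t_n} q - L^n_+(q) = 0$. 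A symmetric calculation on (\ref{eq:BI-ztn-r}), where the right-hand side now contributes a nonvanishing $\pa_{t_n} r$, yields (\ref{eq:adj-eigenfun}).

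For the squared-eigenfunction flow I set $B := \pa_z W \cdot W^{-1}$, an operator of order $\le -1$, and aim to prove $B = q\pa^{-1} r$; once this is established, $\pa_z L = [B,L]$ is (\ref{eq:sqs}) and $\pa_z W = B W$ is (\ref{eq:W-z}). Applying Lemma \ref{lm:key} to (\ref{eq:BI-w_z-w*}) at $\btt = \bt$ only produces $\res_\pa B = qr$, i.e.\ the coefficient of $\pa^{-1}$ in $B$. To pin down every remaining coefficient I differentiate (\ref{eq:BI-w_z-w*}) $m$ times in $t'_1$ at $\btt = \bt$, use $\pa_{t_1}^m w^* = \pa^m w^*$, and apply Lemma \ref{lm:key} again; the identities reduce to $(-1)^m B_{m+1} = q\, r^{(m)}$ for every $m \ge 0$, matching the expansion of $q\pa^{-1}r$ as a pseudo-differential operator. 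The main obstacle is precisely this last step: a single $\ld$-residue identity controls only one coefficient of the infinite-order operator $\pa_z W \cdot W^{-1}$, so the crux is to exploit $\btt$ as a free parameter, together with the already-established KP evolution of $w^*$, to generate an infinite family of independent identities that together determine $B$.
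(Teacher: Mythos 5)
Your proposal is correct and follows essentially the same route as the paper's proof: recovering $\widetilde W=(W^*)^{-1}$ and the Sato equation from (\ref{eq:BI-ww*}), pinning down every coefficient of $B=W_zW^{-1}$ through repeated $t'_1$-differentiation of (\ref{eq:BI-w_z-w*}) combined with Lemma \ref{lm:key} to conclude $W_z=q\pa^{-1}rW$, and extracting the (adjoint) eigenfunction equations from (\ref{eq:BI-ztn-q}) and (\ref{eq:BI-ztn-r}). The only cosmetic deviation is in that last step, where the paper differentiates in the unprimed $t_n$ and pulls the differential operator $L^n_+$ outside the $\ld$-residue, while you differentiate in $t'_n$ and push $-(L^n_+)^*$ through Lemma \ref{lm:key} via the adjoint calculus; both are valid and rest on the same identities.
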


\begin{proof}
  Supposing $W=1+\sum_{i\ge 1}w_i\pa^{-i}$, $\widetilde W = 1+\sum_{i\ge 1}w_i^*\pa^{-i}$, it is
  easy to see $w(z,\bt,\ld) = W (e^{\xi(\bt,\ld)})$ and $w^*(z,\bt,\ld) = \widetilde W
  (e^{-\xi(\bt,\ld)})$. Then from (\ref{eq:BI-ww*}) and Lemma \ref{lm:key}, we know the
  $\pa$-residue {$\res_\pa \; W\widetilde W^*\pa^m $} vanishes for $m\ge 1$, which implies the
  negative part $(W\widetilde W^*)_-$ is $0$. Furthermore, the non-negative part of $W\widetilde W^*$ is
  $1$. So we have proved that $W \widetilde W^*=1$, which means $\widetilde W = (W^*)^{-1}$.

  Now, let's define $L:=W\pa W^{-1}$. We will prove $W_{t_n}=-L^n_- W$, which implies (\ref{eq:KP}).
  From the definition
  form of $W$, it is easy to see that $(W_{t_n}+L^n_- W)_+=0$. Again from
  (\ref{eq:BI-ww*}) we have
  \begin{align*}
    & \res_\pa (W_{t_n}W^{-1}+L^n_-)_- \pa^m=\res_\pa (W_{t_n}W^{-1}+(W\pa^n W^{-1})_-)_- \pa^m\\
    & =\res_\ld W_{t_n} \mbox{e}^\xi \cdot(-\pa)^m (W^*)^{-1}\mbox{e}^{-\xi} +\res_\pa (W\pa^n
    W^{-1}-(W\pa^n
    W^{-1})_+)\pa^m \\
    & =\res_\ld W_{t_n} \mbox{e}^\xi \cdot (-\pa)^m (W^*)^{-1}\mbox{e}^{-\xi}+\res_\ld W\pa^n \mbox{e}^\xi \cdot
    (-\pa)^m
    {W^*}^{-1}\mbox{e}^{-\xi} \\
    & = \res_\ld w_{t_n}(z,\bt,\ld) (-\pa)^m w^*(z,\bt,\ld)=0, \quad \text{(for }m>0\text{)}
  \end{align*}
  so we obtain $W_{t_n}=-L^n_- W$.

  Next, we need to prove $W_{z}=q\pa^{-1}r W$, which implies (\ref{eq:sqs}). It is easy to see that
  $(W_{z}W^{-1})_+=0$, and by using (\ref{eq:BI-w_z-w*}) we have
    \begin{align*}
      &\res_\pa W_{z}W^{-1} \pa^m=\res_\ld W_z\mbox{e}^\xi \cdot (-\pa)^m {W^*}^{-1}\mbox{e}^{-\xi}\\
      &=\res_\ld w_z(z,\bt,\ld)\cdot (-\pa)^m w^*(z,\bt,\ld)=q(-\pa)^m r \quad \text{(for }m>0 \text{)},
  \end{align*}
  which means $W_zW^{-1}=\sum_{m=0}^\infty q(-\pa)^m(r) \pa^{-m-1}=q\pa^{-1}r$, namely,
  $W_{z}=q\pa^{-1}r W$.

  As the final step, we will prove (\ref{eq:eigenfun}) and (\ref{eq:adj-eigenfun}). We have already
  proved $W_{t_n}=-L^n_- W$, which implies $w_{t_n}=L^n_+(w)$. Then according to
  (\ref{eq:BI-ztn-q}), we find
  \begin{align*}
    q_{t_n}(z,\bt)&= -\res_\ld \; w_{t_n}(z,\bt,\ld) \; \cdot \pa^{-1}\left(q(z,\btt) w^*(z,\btt,\ld)\right)\\
    &= L^n_+\left(-\res_\ld w(z,\bt,\ld)\; \cdot \pa^{-1}\left(q(z,\btt) w^*(z,\btt,\ld)\right) \right)\\
    &=L^n_+(q(z,\bt)).
  \end{align*}
  In a similar way, we can find (\ref{eq:adj-eigenfun}) by using (\ref{eq:BI-ztn-r}).
\end{proof}

\section{Bilinear Identity for the Extended KP Hierarchy}
\label{sec:bi-exKP}
In \cite{Liu2008a}, we defined a new extended KP hierarchy. The key idea is to combine a specific
$k$-th order flow of KP hierarchy with the squared eigenfunction symmetry flow like
(\ref{eq:sqs}). In this way, we found two types of KP hierarchy with self-consistent
sources and their Lax representations.

In \cite{Hu2006,Hu2007,Wang2007}, the Hirota's bilinear equations for some (2+1)D integrable equations with sources are constructed by \emph{source
 generation procedure}. For the case of KP
equation, they found that these Hirota's bilinear equations correspond to the first and second type
of KP equation with self-consistent sources.

It is well-known that the bilinear identities are generating equations for Hirota bilinear equations
of KP hierarchy. The natural questions are how to find the bilinear identities for the extended KP
hierarchy and how to derive Hirota's bilinear equations from them. So in this section, we will give
a detailed construction on how to derive the bilinear identities for the extended KP hierarchy
(\ref{eqs:exKP}).

We remind that the extended KP hierarchy (for a fixed $k$)
\begin{subequations}\label{eqs:exKP}
  \begin{align}
    \pa_{t_n}L &= [L^n_+, L]\quad (n\neq k)\\
    \pa_{\bar{t}_k} L &= [L^k_+ + q\pa^{-1}r, L]\label{eq:tau_k-flow}\\
    \pa_{t_n} q &= L^n_+(q)\label{eq:eigenfun1}\\
    \pa_{t_n} r &= -(L^n_+)^*(r)\label{eq:adj-eigenfun1}
  \end{align}
\end{subequations}
is constructed by replacing an arbitrary fixed $k$-th flow $\pa_{t_k}$ by $\pa_{\bar{t}_k}$ where
the $\bar{t}_k$-flow is a combination of $\pa_{t_k}$-
 and $\pa_{z}$-flow given by
(\ref{eq:tau_k-flow}). Functions $q$ and $r$ are eigenfunction and adjoint eigenfunction satisfying
(\ref{eq:eigenfun1}) and (\ref{eq:adj-eigenfun1}), respectively.  Note that $q$ and $r$ depend on
$\bar{t}_k$ rather than $t_k$ in the extendeded KP hierarchy (\ref{eqs:exKP}).

\begin{rmk}
  In order to simplify the notions, we will still use the symbols $w(z,\bt,\ld)$, $w^*(z,\bt,\ld)$,
  $q(z,\bt)$, $r(z,\bt)$, $L$, $W$, etc., in this and the following sections, but they correspond to
  the case of the extended KP hierarchy (\ref{eqs:exKP}), for example, from now on,
  $\bt=(t_1,t_2,\cdots,t_{k-1},\bar{t}_k,t_{k+1},\cdots)$.
\end{rmk}

To find the bilinear identity for the extended KP hierarchy (\ref{eqs:exKP}), the key idea is
supposing that $L$ depends on the auxiliary variable $z$,
whose evolution  is
given by
(\ref{eq:sqs}).

Then we assume that the dressing operator for $L$ with auxiliary variable $z$ is given by $W =
1+\sum_{i\ge 1}w_i(z,\bt)\pa^{-i}$.
And the evolution of $W$ with respect to $z$
and $\bar{t}_k$ are given by (\ref{eqs:W-SE}) and
\begin{equation}
  \label{eq:W-tau_k}
  \pa_{\bar{t}_k} W = -L^k_- W + q\pa^{-1}r W.
\end{equation}
The definition forms for \emph{wave function} and \emph{adjoint wave function} are the same as
(\ref{eqs:wave-func}) except for the functions depending
on $\bar{t}_k$ rather than $t_k$ e.g.,
 $\xi(\bt,\ld) = \bar{t}_k\ld^k+\sum_{i\neq k}t_i\ld^i$.

Then we can prove the following theorem.

\begin{thm}
  The bilinear identity for the extended KP hierarchy (\ref{eqs:exKP})
  is given by the following sets of residue
  identities with auxiliary variable $z$:
  \begin{subequations}\label{eqs:BI-exKP}
    \begin{align}
      &\res_\ld \;w(z-\bar{t}_k,\bt,\ld)\; \cdot w^*(z-\bar{t}_k',\bt',\ld)=0,\label{eq:BI1-exKP}\\
      &\res_\ld \;w_z(z-\bar{t}_k,\bt,\ld)\;\cdot w^*(z-\bar{t}_k',\bt',\ld) = q(z-\bar{t}_k,\bt)
      r(z-\bar{t}_k',\bt'),\label{eq:BI2-exKP}\\
     &
   \res_\ld\; w(z-\bar{t}_k,\bt,\ld) \; \cdot
        \pa^{-1}\left(q(z-\bar{t}_k',\bt') w^*(z-\bar{t}_k',\bt',\ld)\right)
      = -q(z-\bar{t}_k,\bt),\label{eq:BI3-exKP} \\
      &\res_\ld\; \pa^{-1}\left(r(z-\bar{t}_k,\bt) w(z-\bar{t}_k,\bt,\ld)\right)\; \cdot
    w^*(z-\bar{t}_k',\bt',\ld) =
      r(z-\bar{t}_k',\bt'),\label{eq:BI4-exKP}
    \end{align}
  \end{subequations}
  where the inverse of $\pa$ is understood as pseudo-differential operator acting on an exponential
  function, e.g., $\pa^{-1}(r w)=(\pa^{-1}r W) (e^\xi)$.
\end{thm}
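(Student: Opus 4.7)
The plan is to reduce the theorem to Theorem \ref{thm:KP-SES} by a change of variables that absorbs the extra $q\pa^{-1}r$ term in the $\bar t_k$-flow (\ref{eq:W-tau_k}) into the auxiliary $z$-flow (\ref{eq:W-z}). Comparing these two equations directly gives $(\pa_{\bar t_k}-\pa_z)W=-L^k_-W$, which is precisely the ordinary $k$-th KP evolution of $W$. This strongly suggests introducing the shifted dressing operator $\tilde W(z,\bt):=W(z-\bar t_k,\bt)$, along with $\tilde L:=\tilde W\pa\tilde W^{-1}$, $\tilde q(z,\bt):=q(z-\bar t_k,\bt)$, $\tilde r(z,\bt):=r(z-\bar t_k,\bt)$, and the correspondingly shifted wave functions $\tilde w(z,\bt,\ld):=w(z-\bar t_k,\bt,\ld)$ and $\tilde w^{*}(z,\bt,\ld):=w^{*}(z-\bar t_k,\bt,\ld)$.

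A direct chain-rule computation would then verify that these tilde objects satisfy all the hypotheses of Section \ref{sec:bi-KP-sq}. Namely, $\pa_{t_n}\tilde W=-\tilde L^n_-\tilde W$ holds for every $n$: for $n\neq k$ the shift is inactive, while for $n=k$ the $-\pa_z$ contribution produced by the chain rule exactly cancels the $\tilde q\pa^{-1}\tilde r\tilde W$ piece coming from (\ref{eq:W-tau_k}), leaving $-\tilde L^k_-\tilde W$. The squared eigenfunction symmetry $\pa_z\tilde W=\tilde q\pa^{-1}\tilde r\tilde W$ is inherited directly from (\ref{eq:W-z}), and (\ref{eq:eigenfun1})--(\ref{eq:adj-eigenfun1}) lift in the same way to eigenfunction and adjoint-eigenfunction statements for $\tilde q,\tilde r$ relative to $\tilde L^n_+$. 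The exponential factor $e^{\xi(\bt,\ld)}$ in the definition of the wave functions is unaffected by the shift, so $\tilde w$ and $\tilde w^{*}$ retain the precise form needed to apply the earlier theorem.

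Applying Theorem \ref{thm:KP-SES} to the tilde data then produces the four residue identities (\ref{eqs:BI-SE}) for $\tilde w,\tilde w^{*},\tilde q,\tilde r$. Unwrapping the shift by substituting $\tilde w(z,\bt,\ld)=w(z-\bar t_k,\bt,\ld)$ and similarly for the other objects, while allowing independent $\bar t_k$ and $\bar t_k'$ in the $\bt$- and $\bt'$-slots (this is precisely why the first arguments on the two sides of (\ref{eqs:BI-exKP}) differ), reproduces (\ref{eqs:BI-exKP}) term by term: (\ref{eq:BI1-exKP}) from (\ref{eq:BI-ww*}), (\ref{eq:BI2-exKP}) from (\ref{eq:BI-w_z-w*}), and (\ref{eq:BI3-exKP})--(\ref{eq:BI4-exKP}) from (\ref{eq:BI-ztn-q})--(\ref{eq:BI-ztn-r}). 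The main potential obstacle is purely combinatorial, being careful with the chain rule at $n=k$ and keeping track of the separate shifts in the left- and right-slots of each residue; no new analytic input beyond Theorem \ref{thm:KP-SES} and Lemma \ref{lm:key} is required, and the same shift trick combined with the Proposition of Section \ref{sec:bi-KP-sq} would handle the converse direction should it be required.
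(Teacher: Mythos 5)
Your proposal is correct and takes essentially the same route as the paper: the paper's own proof rests on precisely your cancellation, namely $\frac{d}{d\bar{t}_k}w(z-\bar{t}_k,\bt,\ld)=w_{\bar{t}_k}(z-\bar{t}_k,\bt,\ld)-w_z(z-\bar{t}_k,\bt,\ld)=L^k_+(w)$, after which it observes that the arguments of Section \ref{sec:bi-KP-sq} (Lemma \ref{lm:key} and the proofs of (\ref{eq:BI-w_z-w*})--(\ref{eq:BI-ztn-r})) apply to the shifted wave functions essentially unchanged. Your tilde-objects simply package that same observation as a formal change of variables reducing to Theorem \ref{thm:KP-SES}, which is a slightly more systematic write-up of the identical idea.
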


\begin{proof}
  We notice that
  $\frac{d}{d\bar{t}_k}w(z-\bar{t}_k,\bt,\ld)
  = w_{\bar{t}_k}(z-\bar{t}_k,\bt,\ld)-w_{z}(z-\bar{t}_k,\bt,\ld)
  =L^k_+(w)$. 
  The $\ld$-residue of $[\frac{d}{d\bar{t}_k}w(z-\bar{t}_k,\bt,\ld)] w^*(z-\bar{t}_k,\bt,\ld)$ 
  simply vanishes according to Lemma \ref{lm:key}. The same is true for the $\ld$-residues of
  arbitrary mixed derivatives $[\frac{d^{\sum
      m_i}}{dt_1^{m_1}dt_2^{m_2}\cdots}w(z-\bar{t}_k,\bt,\ld)] w^*(z-\bar{t}_k,\bt,\ld)$. 
  Then identity (\ref{eq:BI1-exKP}) holds.

  The proofs of (\ref{eq:BI2-exKP})-(\ref{eq:BI4-exKP}) are almost the same as the proofs of
  (\ref{eq:BI-w_z-w*})-(\ref{eq:BI-ztn-r}). One should notice again that
  $\frac{d}{d\bar{t}_k}w(z-\bar{t}_k,\bt,\ld)=L^k_+(w)$.
\end{proof}

\begin{thm}
  Suppose that there are (adjoint) wave functions
    $$ w(z,\bt,\ld)=\bigg(\sum_{i\ge 0}w_i(z,\bt)\ld^{-i}\bigg)e^{\xi(\bt,\ld)},
    \quad w^*(z,\bt,\ld) = \bigg(\sum_{i\ge 0}\tilde{w}_i(z,\bt)\ld^{-i}\bigg) e^{-\xi(\bt,\ld)}, $$
    (with $w_0=\tilde{w}_0=1$), $q(z,\bt)$ and $r(z,\bt)$ satisfying the bilinear relations
    (\ref{eqs:BI-exKP}), then the pseudo-differential operator $L = W\pa W^{-1}$, $W= \sum_{i\ge
      0}w_i(z,\bt)\pa^{-i}$, and functions $q$ and $r$ give a solution to the extended KP hierarchy
    (\ref{eqs:exKP}).
\end{thm}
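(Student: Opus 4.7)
The plan is to reverse-engineer the structure used in the forward direction by exploiting the shifts $z-\bar{t}_k$ that appear throughout (\ref{eqs:BI-exKP}). The crucial observation is that if one regards $\zeta:=z-\bar{t}_k$ as a frozen parameter, then (\ref{eq:BI1-exKP}) with $\bar{t}_k=\bar{t}_k'$ reduces to an ordinary KP-type bilinear identity for $w(\zeta,\bt,\ld)$ and $w^*(\zeta,\bt',\ld)$. At the same time, the total $\bar{t}_k$-derivative along the curve $\zeta=z-\bar{t}_k$ (with $z$ fixed) is
\[
\frac{d}{d\bar{t}_k}\,w(z-\bar{t}_k,\bt,\ld)=w_{\bar{t}_k}(z-\bar{t}_k,\bt,\ld)-w_z(z-\bar{t}_k,\bt,\ld),
\]
so this shifted $\bar{t}_k$-derivative will play exactly the role that $w_{t_k}$ plays in ordinary KP theory. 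These two facts reduce the analysis to something essentially parallel to the proof of the converse proposition in Section \ref{sec:bi-KP-sq}.

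The execution then proceeds in four steps. First, I would apply the standard argument---multiplication by $(-\pa)^m$, Lemma \ref{lm:key}, and variation of $\bt'$---to (\ref{eq:BI1-exKP}) with $\bar{t}_k=\bar{t}_k'$: this delivers $\widetilde W=(W^*)^{-1}$ together with $W_{t_n}=-L^n_-W$ for every $n\neq k$, which gives $\pa_{t_n}L=[L^n_+,L]$ after setting $L=W\pa W^{-1}$. Second, I would differentiate (\ref{eq:BI1-exKP}) once in $\bar{t}_k$ in the left factor only, then set $\bar{t}_k=\bar{t}_k'$; the same vanishing-residue machinery applied to the combination $w_{\bar{t}_k}-w_z$ produces the operator identity $W_{\bar{t}_k}-W_z=-L^k_-W$. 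Third, from (\ref{eq:BI2-exKP}) I would argue verbatim as in the proposition of Section \ref{sec:bi-KP-sq} to conclude $W_z=q\pa^{-1}rW$; adding this to the previous step yields (\ref{eq:W-tau_k}), and hence (\ref{eq:tau_k-flow}) after computing $\pa_{\bar{t}_k}L$. Fourth, (\ref{eq:BI3-exKP}) and (\ref{eq:BI4-exKP}) give (\ref{eq:eigenfun1}) and (\ref{eq:adj-eigenfun1}) exactly as in the proposition, since the shift $\zeta=z-\bar{t}_k$ is inert under $\pa_{t_n}$ for $n\neq k$.

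The main obstacle, and the only place where extra care is needed, is the second step: because $\bar{t}_k$ enters both in $\bt$ and in the shift $z-\bar{t}_k$, one must carefully distinguish the partial derivative in the $\bt$ slot from the total derivative $d/d\bar{t}_k$, and verify that Lemma \ref{lm:key} still applies cleanly to the shifted dressing operator $\widetilde W$ built from the coefficients $w_i(z-\bar{t}_k,\bt)$ rather than from $w_i(z,\bt)$. Once this bookkeeping is in place, the conclusion $(W_{\bar{t}_k}-W_z+L^k_-W)_-=0$ follows from the standard negative-part-vanishing argument, and since the non-negative part is manifestly zero, the operator identity drops out; the remainder of the proof is a routine translation of the proposition of Section \ref{sec:bi-KP-sq}.
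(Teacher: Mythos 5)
Your proposal is correct and follows essentially the same route as the paper's own proof: establish $\widetilde W=(W^*)^{-1}$ and the ordinary flows via Lemma \ref{lm:key}, get $W_z=q\pa^{-1}rW$ from (\ref{eq:BI2-exKP}), extract the operator identity $\frac{d}{d\bar{t}_k}W(z-\bar{t}_k,\bt)=-L^k_-W$ from the total $\bar{t}_k$-derivative of (\ref{eq:BI1-exKP}) so that $\pa_{\bar{t}_k}W=(-L^k_-+q\pa^{-1}r)W$, and obtain (\ref{eq:eigenfun1})--(\ref{eq:adj-eigenfun1}) by differentiating (\ref{eq:BI3-exKP})--(\ref{eq:BI4-exKP}). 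Your ``extra care'' point about distinguishing the partial from the total $\bar{t}_k$-derivative is exactly the step the paper makes explicit by writing $\frac{d}{d\bar{t}_k}w(z-\bar{t}_k,\bt,\ld)=\left(\frac{d}{d\bar{t}_k}W(z-\bar{t}_k,\bt)+L^kW\right)e^{\xi(\bt,\ld)}$.
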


\begin{proof}
  Let us define $\tilde{W}:= \sum_{i\ge 0}\tilde{w}_i(z,\bt)\pa^{-i}$. With the same argument as in
  Theorem \ref{thm:KP-SES}, we will find that $\tilde{W} = (W^{-1})^*$. Then from
  (\ref{eq:BI2-exKP}) and Lemma \ref{lm:key}, we find that for any $m>0$,
  \begin{displaymath}
    \res_\pa W_z W^{-1}(-\pa)^m = q \pa^m(r),
  \end{displaymath}
  which means $(W_z W^{-1})_- = q\pa^{-1}r$. Notice that the non-negative part of $W_z W^{-1}$
  vanishes, so we have $W_z = q\pa^{-1}r W$.

  Next, from the coefficient of Taylor expansion of (\ref{eq:BI1-exKP}), we find
  \begin{displaymath}
    \res_\ld \; \dfrac{d}{d\bar{t}_k}w(z-\bar{t}_k,\bt,\ld)\cdot \pa^mw^*(z-\bar{t}_k,\bt,\ld) = 0.
  \end{displaymath}
  By realizing that $\dfrac{d}{d\bar{t}_k}w(z-\bar{t}_k,\bt,\ld)$ is
  $(\dfrac{d}{d\bar{t}_k}W(z-\bar{t}_k,\bt)+L^kW) \exp(\xi(\bt,\ld))$ and using Lemma \ref{lm:key},
  we find
  \begin{displaymath}
    \res_\pa (\dfrac{d}{d\bar{t}_k}W(z-\bar{t}_k,\bt) + L^k_-W)W^{-1}(-\pa)^m = 0,\quad \text{(for } m>0\text{)}
  \end{displaymath}
  which means $\dfrac{d}{d\bar{t}_k}W(z-\bar{t}_k,\bt) = - L^k_- W$. Hence $\pa_{\bar{t}_k}W(z,\bt) = (-L^k_- +
  q\pa^{-1}r)W$.

  For equations (\ref{eq:eigenfun1}) and (\ref{eq:adj-eigenfun1}), the proof can be done by
  differentiating (\ref{eq:BI3-exKP}) and (\ref{eq:BI4-exKP}) directly. Since $\pa_{t_n}w =
  L^n_+(w)$, $\pa_{t_n}w^* = -(L^n_+)^*(w^*)$, the rest of proof is obvious.
\end{proof}

\section{Tau-Function for the Extended KP Hierarchy}
\label{sec:tau-exKPH}
The existence of $\tau$-function for original bilinear identity of KP hierarchy is proved in
\cite{DateMiwa1983}. In our case, the wave function $w(z-\bar{t}_k,\bt,\ld)$ and
$w^*(z-\bar{t}_k,\bt,\ld)$ satisfy exactly the same bilinear identity (\ref{eq:BI1-exKP}) as the
original KP case
if one considers $z$ as an additional parameter. So it is reasonable to
assume the existence of $\tau$-function and make the following ansatz:
\begin{subequations}\label{eqs:ansatz}
\begin{align}
  &
  w(z-\bar{t}_k,\bt,\ld)=\frac{\ta(z-\bar{t}_k+\frac{1}{k\ld^k},\bt-[\ld])}{\ta(z-\bar{t}_k,\bt)}\cdot
  \exp\xi(\bt,\ld),\\
  & w^*(z-\bar{t}_k,\bt,\ld)=\frac{\ta(z-\bar{t}_k-\frac{1}{k\ld^k},\bt+[\ld])}{\ta(z-\bar{t}_k,\bt)}\cdot \exp
  (-\xi(\bt,\ld)),
\end{align}
where $[\ld]=\left(\frac{1}{\ld},\frac1{2\ld^2},\frac1{3\ld^3},\cdots\right)$.  According to
\cite{Cheng1994}, we should make further assumptions:
\begin{align}
  q(z,\bt)=\frac{\sg(z,\bt)}{\ta(z,\bt)}, \qquad r(z,\bt)=\frac{\rho(z,\bt)}{\ta(z,\bt)}.
\end{align}
Then, similar with \cite{Cheng1994}, we have the following results.
\begin{align}
  &\pa^{-1}(r(z-\bar{t}_k,\bt)w(z-\bar{t}_k,\bt,\ld)) = \dfrac{\rho(z-\bar{t}_k+\frac{1}{k\ld^k},\bt-[\ld])}{\ld
    \ta(z-\bar{t}_k,\bt)}e^{\xi(\bt,\ld)},\\
  &\pa^{-1}(q(z-\bar{t}_k,\bt)w^*(z-\bar{t}_k,\bt,\ld)) = \dfrac{-\sg(z-\bar{t}_k-\frac{1}{k\ld^k},\bt+[\ld])}{\ld
  \ta(z-\bar{t}_k,\bt)}e^{-\xi(\bt,\ld)}.
\end{align}
\end{subequations}

To find Hirota's bilinear equations for extended KP hierarchy (\ref{eqs:exKP}), we substitute
(\ref{eqs:ansatz}) into (\ref{eqs:BI-exKP}), and get
\begin{subequations}\label{eqs:exBI-tau}
\allowdisplaybreaks
\begin{align}
  &\res_\ld\;\bar{\ta}\big(z,\bt-[\ld]\big)\;\bar{\ta}\big(z,\bt'+[\ld]\big)
  \;e^{\xi(\bt-\bt',\ld)} =0,\\
  &\res_\ld\;\bar\ta_z\big(z,\bt-[\ld]\big)\;
    \bar\ta\big(z,\bt'+[\ld]\big)\;e^{\xi(\bt-\bt',\ld)}\notag\\
  &\phantom{==}-
  \res_\ld\;\bar\ta\big(z,\bt-[\ld]\big)\;(\pa_z\log\bar\ta(z,\bt))
    \ \bar\ta\big(z,\bt'+[\ld]\big)\;
  e^{\xi(\bt-\bt',\ld)} \notag \\
  &\phantom{===========================}
  = \bar\sg(z,\bt)\ \bar\rho(z,\bt'),\\
  &\res_\ld\;\ld^{-1}\bar\ta\big(z,\bt-[\ld]\big)\;
  \bar\sg\big(z,\bt'+[\ld]\big)\; e^{\xi(\bt-\bt',\ld)}= \bar\sg(z,\bt)\ \bar\ta(z,\bt'),\\
  &\res_\ld\;\ld^{-1}\bar\rho\big(z,\bt-[\ld]\big) \
  \bar\ta\big(z,\bt'+[\ld]\big)\; e^{\xi(\bt-\bt',\ld)} =  \bar\rho(z,\bt')\ \bar\ta(z,\bt).
\end{align}
Here the bar $\bar{\phantom{g}}$ over a function $f(z,\bt)$ is defined as
$\bar f(z,\bt)\equiv f(z-\bar{t}_k,\bt)$, e.g,
$\bar{\ta}\big(z,\bt-[\ld]\big)\equiv
{\ta}\big(z-(\bar{t}_k-\frac{1}{k\ld^k}),\bt-[\ld]\big)$,
$\bar{\ta}\big(z,\bt'+[\ld]\big)\equiv
{\ta}\big(z-(\bar{t}_k+\frac{1}{k\ld^k}),\bt'+[\ld]\big)$.

\end{subequations}
After setting $\bt$ as $\bt+\by$ and $\btt$ as $\bt-\by$ in (\ref{eqs:exBI-tau}) with
$\by=(y_1,y_2,\cdots)$, We can write (\ref{eqs:exBI-tau}) as the following systems with Hirota
bilinear derivatives $\tilde D$ and $D_i$'s:
\begin{subequations}\label{gen-hirota}
  \begin{align}
    &\sum_{i\ge0}p_i(2\by)p_{i+1}(-\tilde{D})\exp{\left(\sum_{j\ge1} y_j D_j\right)}
    \bar \ta(z,\bt)\cdot \bar \ta(z,\bt)=0,\label{gen-hirota-tau}\\
    & \sum_{i\ge0}p_i(2\by)p_{i+1}(-\tilde{D})\exp\left(\sum_{j\ge1}y_j D_j\right)
    \bar\ta_{z}(z,\bt)\cdot\bar\ta(z,\bt)\notag\\
    &\phantom{=}
  -\sum_{i\ge0}p_i(2\by)\left(\pa_z\log\bar\ta(z,\bt+\by)\right)p_{i+1}(-\tilde{D})\bar\ta(z,\bt+\by)\cdot\bar\ta(z,\bt-\by) \notag\\
    &\phantom{===================}=\exp\left(\sum_{j\ge1}y_j D_j\right)\bar\sg(z,\bt)\cdot\bar\rho(z,\bt)\label{gen-hirota-tau-sigma-rho}\\
    &\sum_{i\ge0}p_i(2\by) p_i(-\tilde{D})\exp\left(\sum_{j\ge1}y_j
      D_j\right)\bar\ta(z,\bt)\cdot\bar\sg(z,\bt)=\exp\left(\sum_{j\ge1}y_jD_j\right)\bar\sg(z,\bt)\cdot\bar\ta(z,\bt),
    \label{gen-hirota-sigma}\\
    &\sum_{i\ge0}p_i(2\by) p_i(-\tilde{D})\exp\left(\sum_{j\ge1}y_j
      D_j\right)\bar\rho(z,\bt)\cdot\bar\ta(z,\bt)=\exp\left(\sum_{j\ge1}y_jD_j\right)\bar\ta(z,\bt)\cdot\bar\rho(z,\bt),
    \label{gen-hirota-rho}
  \end{align}
\end{subequations}
where
$\tilde{D}=(D_1,\frac12 D_2,\frac13
D_3,\cdots)$, $D_i$ is the well-known Hirota bilinear derivative $D_if\cdot g = f_{t_i}g-f g_{t_i}$,
and  $p_i(\by)$
 is the $i$-th \emph{Schur polynomial}, whose generating function is given by
\begin{displaymath}
  \exp\sum_{i=1}^\infty y_i\ld^i = \sum_{i=0}^\infty p_i(\by)\ld^i.
\end{displaymath}

\begin{rmk}
  The zero-th order
   term in (\ref{gen-hirota-tau-sigma-rho}) (with $y_j=0$ $\forall\ j$) can be written
  in the following forms with Hirota's operator
    $$ \sg\rho + D_x \tau_{z}\cdot  \tau
  = \sg\rho + D_{z} \tau_x \cdot \tau
  =  \sg\rho + \frac12 D_{x}D_{z} \tau \cdot \tau =0.$$
\end{rmk}

Note that (\ref{gen-hirota}) is the generating equations of Hirota's bilinear equations for the
extended KP hierarchy (\ref{eqs:exKP}), where the dependence of auxiliary parameter $z$ can be
 eliminated if one wants to convert bilinear equations to PDEs. In the following sections
we will give several examples to show 
the Hirota's bilinear forms
for the nonlinear evolution equations in extended KP hierarchy and how it can be transformed back to
nonlinear PDEs (which will correspond to the well-known KPSCS-I and II, etc.).

\begin{eg}[First type of KP equation with a source (KPSCS-I)
  \cite{Hu2007,Liu2008a,Melnikov1983,Melnikov1987}, i.e., the extended KP hierarchy (\ref{eqs:exKP})
  for $n=2$ and $k=3$]\label{eg:KP1}
  The Hirota equations for the KPSCS-I can be obtained as
  \begin{align*}
    & D_{x} \ta_z \cdot \ta +\sg\rho=0, & \text{from (\ref{gen-hirota-tau-sigma-rho}) with $y_j=0$} \\
    & (D_x^4+3D_{t_2}^2-4D_x(D_{\bar{t}_3}-D_z)) \ta \cdot \ta =0, & \text{from
      (\ref{gen-hirota-tau}) in $y_3$}\\
    & (D_{t_2}+D_x^2) \ta\cdot \sg =0, & \text{from (\ref{gen-hirota-sigma}) in $y_2$} \\
    & (D_{t_2}+D_x^2) \rho \cdot \ta=0, & \text{from (\ref{gen-hirota-rho}) in $y_2$}
  \end{align*}
  where $D_z$ is Hirota's derivative, i.e., $D_z f(z)\cdot g(z)=f_z g- f g_z$. Note that from the
  definition of $\bar\tau$, we know that $D_{\bar{t}_3}\bar\tau\cdot\bar\tau =
  (D_{\bar{t}_3}-D_z)\tau\cdot\tau$, which interprets the appearance of this term in the second
  equation.

\end{eg}

\begin{eg}[Second type of KP with a source (KPSCS-II) \cite{Hu2007,Liu2008a,Melnikov1983,Melnikov1987},
i.e., the extended KP hierarchy (\ref{eqs:exKP}) for $n=3$ and $k=2$]\label{eg:KP2}
  The Hirota equations for the KPSCS-II can be obtained as
  \begin{align*}
    & D_{x} \ta_z \cdot \ta +\sg\rho=0, & \text{from (\ref{gen-hirota-tau-sigma-rho}) with $y_j=0$} \\
    & (D_x^4+3(D_{\bar{t}_2}-D_z)^2-4D_xD_{t_3}) \ta \cdot \ta =0, & \text{from
      (\ref{gen-hirota-tau}) in $y_3$}\\
    & ((D_{\bar{t}_2}-D_z)+D_x^2) \ta\cdot \sg =0, & \text{from (\ref{gen-hirota-sigma}) in $y_2$} \\
    & ((D_{\bar{t}_2}-D_z)+D_x^2) \rho \cdot \ta=0. & \text{from
      (\ref{gen-hirota-rho}) in $y_2$} \\
    & (4D_{t_3}-D_x^3+3D_x(D_{\bar{t}_2}-D_z)) \ta \cdot \sg =0, &
    \text{from (\ref{gen-hirota-sigma}) in $y_3$}\\
    & (4D_{t_3}-D_x^3+3D_x(D_{\bar{t}_2}-D_z))\rho\cdot \ta = 0. & \text{from (\ref{gen-hirota-rho}) in $y_3$}
  \end{align*}
  It seems that the Hirota bilinear equations obtained here for KPSCS-II are in a simpler form
  comparing with the results by Hu and Wang \cite{Hu2007}.

\end{eg}

\begin{eg}[The extended KP hierarchy (\ref{eqs:exKP}) for $n=4$ and $k=2$ \cite{Liu2008a}]\label{eg:nextKP}
  The Hirota bilinear equations for this system can be obtained as
  \begin{align*}
    & D_{x} \ta_z \cdot \ta +\sg\rho=0, & \text{ from (\ref{gen-hirota-tau-sigma-rho}) with $y_j=0$} \\
    & \left(D_x^4 + 3 (D_{\bar{t}_2}-D_z)^2 - 4 D_x D_{t_3}\right) {\ta}\cdot {\ta} = 0, &
    \text{from (\ref{gen-hirota-tau}) in $y_3$ }\\
    & \left(3D_xD_{t_4} - 2 (D_{\bar{t}_2}-D_z)D_{t_3}-D_x^3 (D_{\bar{t}_2}-D_z)\right){\ta}\cdot {\ta}
    = 0,
    & \text{from (\ref{gen-hirota-tau}) in $y_4$}\\
    & \left((D_{\bar{t}_2}-D_z) + D_x^2\right)\ta\cdot \sg = 0, & \text{from (\ref{gen-hirota-sigma}) in $y_2$}\\
    & \left((D_{\bar{t}_2}-D_z) + D_x^2\right)\rho\cdot \ta = 0, &\text{from (\ref{gen-hirota-rho}) in $y_2$}\\
    & \left(4D_{t_3} - D_x^3 + 3D_x(D_{\bar{t}_2}-D_z)\right)\ta\cdot\sg = 0, &
    \text{from (\ref{gen-hirota-sigma}) in $y_3$}\\
    & \left(4D_{t_3} - D_x^3 + 3D_x(D_{\bar{t}_2}-D_z)\right)\rho\cdot\ta = 0, &\text{ from (\ref{gen-hirota-rho}) in $y_3$}\\
    & \left(18D_{t_4}+D_x^4 -6D_x^2 (D_{\bar{t}_2}-D_z)+3(D_{\bar{t}_2}-D_z)^2 + 8 D_x
      D_{t_3}\right)\ta\cdot\sg
    =0, &\text{from (\ref{gen-hirota-sigma}) in $y_4$}\\
    &\left(18D_{t_4} +D_x^4 -6D_x^2 (D_{\bar{t}_2}-D_z)+3(D_{\bar{t}_2}-D_z)^2 + 8 D_x
      D_{t_3}\right)\rho\cdot\ta =0. &\text{from (\ref{gen-hirota-rho}) in $y_4$}
  \end{align*}

\end{eg}

\section{Back to nonlinear equations from Hirota's bilinear equations}
\label{sec:Back-nonlinear}

To see the nonlinear PDEs corresponding to the Hirota's bilinear equations
in the previous examples
in last section, we convert the bilinear equations back to nonlinear PDEs in this section.
It is not as simple as generating special bilinar equations. We follow the way given in \cite{Hirota2004}.

Consider the following identities
\begin{subequations}\label{eqs:D-p}
  \begin{align}
    &\exp\left(\sum_i\delta_i D_i\right) \rho\cdot \tau =
    e^{2\cosh\left(\sum_i\delta_i\partial_i\right)\log\tau}\cdot
    e^{\sum_i\delta_i\pa_i}\left(\rho/\tau\right),\label{eq:D-p1}\\
    &\cosh\left(\sum_i\delta_iD_i\right)\tau\cdot\tau = e^{2\cosh(\sum_i\delta_i\pa_i)\log\tau}.\label{eq:D-p2}
  \end{align}
\end{subequations}
\begin{rmk}
  Note that in this section, $D_if\cdot g \defeq f_{t_i}g-f g_{t_i}$.
\end{rmk}

The identities (\ref{eqs:D-p}) are easy to prove. One should notice the definitions of operator
$D_i$, for example, (\ref{eq:D-p1}) is proved by checking from left hand side:
\begin{align*}
  \text{l.h.s.} &= \rho(t_1+\delta_1,t_2+\delta_2,\cdots ) \tau (t_1-\delta_1,t_2-\delta_2,\cdots),\\
  \text{r.h.s.} &=\exp\left[(e^{\sum_i\delta_i\pa_i}+e^{-\sum_i\delta_i\pa_i})\log\tau \right]\cdot
  \frac{\rho(t_1+\delta_1,t_2+\delta_2,\cdots)}{\tau(t_1+\delta_1,t_2+\delta_2,\cdots)}\\
  &=\exp\left[\log\tau(t_1+\delta_1,t_2+\delta_2,\cdots)+\log\tau(t_1-\delta_1,t_2-\delta_2,\cdots)\right]
  \\
  &\quad \cdot \frac{\rho(t_1+\delta_1,t_2+\delta_2,\cdots)}{\tau(t_1+\delta_1,t_2+\delta_2,\cdots)}\\
  &=\rho(t_1+\delta_1,t_2+\delta_2,\cdots ) \tau (t_1-\delta_1,t_2-\delta_2,\cdots).
\end{align*}

Using the transformations $u\defeq \pa_x^2\log\tau$ ($x\equiv t_1$), $r\defeq \rho/\tau$, $q \defeq \sigma/\tau$,
we can rewrite (\ref{eqs:D-p}) into the following form.
\begin{subequations}
  \begin{align}
    &\frac1{\tau^2}\sum_{n=0}^\infty\frac{\left(\sum_i\delta_iD_i\right)^n}{n!}\rho\cdot\tau =
    \exp\left[2\sum_{n=0}^\infty\frac{\sum_i\left(\delta_i\pa_i\right)^{2n}}{2n!}\pa^{-2}u\right]
    \cdot e^{\sum_i\delta_i\pa_i}r,\\
    &\frac1{\tau^2}\sum_{n=0}^\infty\frac{(\sum_i\delta_iD_i)^{2n}}{2n!}\tau\cdot\tau=
    \exp\left[2\sum_{n=0}^\infty\frac{\left(\sum_i\delta_i\pa_i\right)^{2n}}{2n!}\pa^{-2}u\right].
  \end{align}
\end{subequations}
Thus, relations between the Hirota bilinear derivatives and usual partial derivatives with respect
to original variables are established. We list some of these relations as follows.
\begin{center}
  \begin{tabular}[H]{|c||c||c||c|}
    \hline
    $\dfrac{D_1^2 \tau\cdot\tau}{\ta^2}$ &   $\dfrac{D_1 D_2 \tau\cdot\tau}{\ta^2}$ &
    $\dfrac{D_1D_3 \tau\cdot\tau}{\ta^2}$ &  $\dfrac{D_1D_4 \tau\cdot\tau}{\ta^2}$ \\
    \hline
    $2u$ & $2\pa^{-1}u_2$  & $2\pa^{-1}u_3$ & $2\pa^{-1}u_4$\\
    \hline
    \hline
    $\dfrac{D_{2}D_{3}\tau\cdot\tau}{\ta^2}$ &  $\dfrac{D_1^3 D_{2} \tau\cdot\tau}{\ta^2}$ &
    $\dfrac{D_1^4 \tau\cdot\tau}{\ta^2}$ & $\cdots$\\
    \hline
    $2\pa^{-2}u_{2,3}$ & $2u_{1,2} + 12u(\pa^{-1}u_{2})$ & $2u_{1,1}+12u^2$ & $\cdots$\\
    \hline
    \hline
    $\dfrac{D_1\rho\cdot\tau}{\tau^2}$ & $\dfrac{D_2\rho\cdot\tau}{\tau^2}$ &
    $\dfrac{D_3\rho\cdot\tau}{\tau^2}$ & $\dfrac{D_4\rho\cdot\tau}{\tau^2}$ \\
    \hline
    $r_1$ & $r_2$ & $r_3$ & $r_4$\\
    \hline
    \hline
    $\dfrac{D_1^2\rho\cdot\tau}{\tau^2}$ & $\dfrac{D_1D_2\rho\cdot\tau}{\tau^2}$
    & $\dfrac{D_1 D_3\rho\cdot\tau}{\tau^2}$  & $\dfrac{D_2^2\rho\cdot\tau}{\tau^2}$ \\
    \hline
    $r_{1,1} + 2ur$ & $r_{1,2} + 2r \pa^{-1}u_2$ &  $r_{1,3}+2r\pa^{-1}u_3$ & $r_{2,2} + 2r\pa^{-2}u_{2,2}$ \\
    \hline
    \hline
    $\dfrac{D_1^3\rho\cdot\tau}{\tau^2}$ &
    $\dfrac{D_1^2D_2\rho\cdot\tau}{\tau^2}$ &
    $\dfrac{D_1^4\rho\cdot\tau}{\tau^2}$ & $\cdots$\\
    \hline
    $r_{1,1,1} + 6ur_1$ & $r_{1,1,2} + 2ur_2 + 4r_1\pa^{-1}u_2$ &
    \begin{minipage}{1.1in}
      \begin{equation*}
        \begin{aligned}
          &r_{1,1,1,1} + 12 u r_{1,1}\\
          &2(6u^2 + u_{1,1})r
        \end{aligned}
      \end{equation*}
    \end{minipage} &$\cdots$
    \\
    \hline
  \end{tabular}
\end{center}
where the subscript $i$, $j$ $\cdots$ of $u_{i,j,\ldots}$ and $r_{i,j,\ldots}$ denote the derivative
with respect to the corresponding variables $t_i$, $t_j$ $\cdots$.

\begin{eg}
  Example \ref{eg:KP1} can be translated back to nonlinear PDEs as
  ($y\equiv t_2$)
  \begin{align*}
    &\pa_z\pa^{-1}  u +  q r=0,\\
    &(u_{xxx}+12u u_x -4u_{\bar{t}_3}+4u_z)_x + 3u_{yy}=0,\\
    & q_y = q_{xx} + 2uq,\\
    & r_y = -r_{xx} - 2u r.
  \end{align*}
  If we eliminate auxiliary variable $z$ from the above equations,
  we get the first type of KP equation with a source
  \cite{Hu2007,Liu2008a,Melnikov1983,Melnikov1987}
  \begin{align*}
      &  (4u_{\bar{t}_3}-u_{xxx} -12u u_x)_x -3 u_{yy} + 4(qr)_{xx}=0,\\
    & q_y = q_{xx} + 2uq,\\
    & r_y = -r_{xx} - 2u r,
  \end{align*}
  whose soliton solutions can be obtained by dressing method \cite{Liu2009} and Hirota method
  \cite{Hu2007}.
\end{eg}

\begin{eg}
  Example \ref{eg:KP2} can be translated back to nonlinear PDEs as
  ($y\equiv t_2, t\equiv t_3$)
  \begin{align*}
    &\pa_z\pa^{-1}  u +  q r=0,\\
    &(u_{xxx}+12 u u_x -4u_{t_3})_x + 3 (\pa_y-\pa_z)^2u=0,\\
    &q_y-q_z=q_{xx} + 2uq,\\
    &r_y-r_z = -r_{xx} - 2ur,\\
    &-4q_t + q_{xxx} + 6uq_x +3q_{xy}-3q_{xz} + 6q\pa_y\pa^{-1}u - 6q\pa_z\pa^{-1}u =0,\\
    &4r_t - r_{xxx} - 6ur_x +3r_{xy}-3r_{xz} + 6r\pa_y\pa^{-1}u - 6r\pa_z\pa^{-1}u =0.
  \end{align*}
  By eliminating auxiliary variable $z$, we reach
  the second type of KP equation with a source
  \cite{Hu2007,Liu2008a,Melnikov1983,Melnikov1987}
  \begin{align*}
    &(4u_t -u_{xxx} -12u u_x)_x -3u_{yy}=3\left[(qr)_y+(q_{xx}r-qr_{xx})\right]_x,\\
    &q_t = q_{xxx} + 3uq_x + \frac32 u_x q + \frac32q\pa^{-1}u_y + \frac32 q^2r,\\
    &r_t = r_{xxx} + 3ur_x + \frac32 u_x r - \frac32r\pa^{-1}u_y-\frac32 r^2q,
  \end{align*}
  whose soliton solutions can be obtained by dressing method \cite{Liu2009} and Hirota method
  \cite{Hu2007}.
\end{eg}

\begin{eg}
  Example \ref{eg:nextKP} can be translated to nonlinear PDEs as ($y\equiv t_2$)
  \begin{align*}
    &\pa_z\pa^{-1} u + q r=0,\\
    &2u_{xx}+12u^2+6\pa^{-2}[(\pa_y-\pa_z)^2u] - 8\pa^{-1}u_{t_3} = 0,\\
    &6\pa^{-1}u_{t_4}-4\pa^{-2}(u_{t_3,y}-u_{t_3,z})-12u\pa^{-1}(u_y-u_z) -
    2(u_{xy}-u_{xz})= 0,\\
    &q_y=q_z + q_{xx} + 2uq,\\
    &r_y=r_z - r_{xx} - 2ur,\\
    &4q_{t_3}=q_{xxx} + 6uq_x + 3\left(q_{xy}-q_{xz} + 2q\pa^{-1}(u_y-u_z)\right),\\
    &4r_{t_3}=r_{xxx} + 6ur_x - 3\left(r_{xy}-r_{xz} + 2r\pa^{-1}(u_y-u_z)\right),\\
    &18q_{t_4} =q_{xxxx}+ 12 u q_{xx} + 24 q_x \pa^{-1}(u_y-u_z) +\big(12 u^2 +
    2u_{xx}+16\pa^{-1}u_{t_3}\\
    &\phantom{18q_{t_4} =} + 6 \pa^{-2}(\pa_y-\pa_z)^2u\big)q + 12 u (q_y-q_z) + 6
    (q_{xxy}-q_{xxz}) + 3(\pa_y-\pa_z)^2q + 8q_{x t_3},\\
    &18r_{t_4} =-r_{xxxx}- 12 u r_{xx}+ 24 r_x\pa^{-1}(u_y-u_z) -\big(12 u^2  + 2u_{xx}+16
    \pa^{-1}u_{t_3} \\
    &\phantom{18r_{t_4}=}+6\pa^{-2}(\pa_y-\pa_z)^2u\big)r + 12 u (r_y-r_z) +6
    (r_{xxy}-r_{xxz}) - 3(\pa_y-\pa_z)^2r - 8 r_{x t_3}
  \end{align*}
  Eliminating the $\pa_z$ and $\pa_{t_3}$ terms, we get
  the extended KP hierarchy (\ref{eqs:exKP}) for $n=4$ and $k=2$ \cite{Liu2008a}
  \begin{subequations}\label{eqs:Hirota-nonlinear-k4m2}
    \begin{align}
      2u_{t_4} =& \pa^{-2}u_{yyy} + u_{xxy} +  4(u^2)_y + 4(\pa^{-1}u_y)u_x + 8(uqr)_x +\pa^{-1}(qr)_{yy} \notag \\
      &+ 2(q_{xx}r+qr_{xx})_x+(rq_x-qr_x)_y, \\
      q_{t_4} =& q_{xxxx}+ 4uq_{xx} +(4u_x +4qr+2\pa^{-1}u_y)q_x + (\pa^{-2}u_{yy} + \pa^{-1}(qr)_y \notag\\
      &+4u^2 + 2u_{xx}+u_y)q,\\
      -r_{t_4}  =& r_{xxxx}+4ur_{xx} + (4u_x - 4 qr -2\pa^{-1}u_y)r_x
      +(\pa^{-2}u_{yy}+\pa^{-1}(qr)_y \notag\\
      &+ 4u^2 + 2u_{xx} -u_y)r,
    \end{align}
  \end{subequations}
  whose Wronskian type solution (including soliton solutions) can be obtained by dressing method
  \cite{Liu2009}.  The correctness of (\ref{eqs:Hirota-nonlinear-k4m2}) can be verified from the
  zero-curvature equation of extended KP hierarchy (\ref{eqs:exKP}) with $n=4$ and $k=2$:
  \begin{subequations}\label{eqs:zc-k4m2}
    \begin{align}
      &B_{4,\bar{t}_2}-B_{2,t_4} +[B_4,B_2+q\pa^{-1} r]_+=0,\\
      &\pa_{t_4}q =B_4(q ), \\
      &\pa_{t_4}r =-B_4^*(r ).
    \end{align}
  \end{subequations}
  In the expressions of (\ref{eqs:zc-k4m2}), if we eliminate $u_2$ and $u_3$ from $\pa^0$ term by
  using $\pa^2$ and $\pa$ terms, we can verify that (\ref{eqs:zc-k4m2}) is exactly
  (\ref{eqs:Hirota-nonlinear-k4m2}).
\end{eg}

\section{Conclusion and Discussions}
\label{sec:conclusion}
In this paper, we constructed the bilinear identities (\ref{eqs:BI-exKP}) for the extended KP
hierarchy (\ref{eqs:exKP}) defined in \cite{Liu2008a}.  The bilinear identities (\ref{eqs:BI-exKP})
are used to derive the Hirota's bilinear equations (\ref{gen-hirota}) for all the zero-curvature
forms of the extended KP hierarchy (\ref{eqs:exKP}).  We have shown that the Hirota's bilinear forms
in Example \ref{eg:KP1} and Example \ref{eg:KP2} correspond to the KP equation with a self-consistent source
(KPSCS-I and II).  After
translating the Hirota equations back to nonlinear PDEs, the correctness of
our bilinear forms (\ref{eqs:BI-exKP}) are verified.  Another forms of Hirota's bilinear equations
for KPSCS-I and KPSCS-II have been given in \cite{Hu2007} by using \emph{source generation procedure},
where the auxiliary functions $P_i$ and $Q_i$ are introduced.  In this paper, by introducing an
auxiliary flow ($\partial_z$-flow), the bilinear identity for the whole extended KP hierarchy and a
simpler Hirota's form are obtained. To show the validity of our method, we gave an extra example in
the extended KP hierarchy (Example \ref{eg:nextKP}).

There are some important applications for bilinear identities of extended KP hierarchy.
As we know,
the quasi-periodic solutions for KP hierarchy can be constructed by using method
in algebraic geometry, where the construction of wave functions (or Baker-Akhiezer functions as in
quasi-periodic cases) are intimately related with bilinear identities, Riemann surfaces and divisors
on it.  It is very interesting to consider the quasi-periodic solutions for the extended KP
hierarchy (\ref{eqs:exKP}) when bilinear identities have been obtained in this paper. Another
interesting problem is to consider the bilinear identities for other extended hierarchies, such as
BKP, CKP,
2D Toda and discrete KP, etc. We would like to investigate these problems in the
future.

\section*{Acknowledgement}
The authors  would like to thank
R. Conte and M. Ma\~nas for valuable discussions.  This work is supported by National
Natural Science Foundation of China (10901090, 11171175,  11201477) and Chinese Universities Scientific Fund
(2011JS041).
RL is supported in part by the joint project between
Chinese NSFC and French CNRS in 2011.

\bibliographystyle{hunsrt}

\begin{thebibliography}{99}


\bibitem{Aratyn1998}
Aratyn, H., Nissimov, E., Pacheva, S.:
Method of squared eigenfunction potentials in integrable hierarchies of KP type.
Commun. Math. Phys. {\bf 193}, 493--525 (1998)

\bibitem{Aratyn1999}
Aratyn, H., Nissimov, E., Pacheva, S.:
Multi-component matrix KP hierarchies as symmetry-enhanced scalar KP hierarchies
 and their Darboux-B\"acklund solutions.
 In: B\"acklund and Darboux transformations., CRM Proc. Lecture Notes, Amer. Math. Soc., pp.109--120 (2001)
\texttt{arXiv:solv-int/9904024}

\bibitem{Carlet2003}
Carlet, G., Dubrovin, B., Zhang, Y.:
The extended Toda hierarchy.
Moscow Math. J. {\bf 4}, 313--332   (2004)
\texttt{arXiv:nlin/0306060}


\bibitem{Cheng1994}
Cheng, Y., Zhang, Y.J.:
Bilinear equations for the constrained KP hierarchy.
Inverse Probl. {\bf 10}, L11--L17 (1994)

\bibitem{Cheng1995}
Cheng, Y.:
Modifying the KP, the $n$-th constrained KP hierarchies and their Hamiltonian structures.
Commun. Math. Phys. {\bf 171}, 661--682 (1995)

\bibitem{DateMiwa1983}
Date, E., Kashiwara, M., Jimbo, M., Miwa, T.: Transformation groups for
  soliton equations. In: Jimbo M., Miwa T. (eds), Nonlinear Integrable Systems--Classical Theory and
  Quantum Theory (Kyoto, 1981), pp.39--119. World Scientific, Singapore (1983)

\bibitem{dickey2003soliton}
Dickey, L.A.:
Soliton equations and Hamiltonian systems.
World Scientific, Singapore (2003)


\bibitem{Dimakis2004a}
Dimakis, A., M\"uller-Hoissen, F.:
Extension of Moyal-deformed hierarchies of soliton equations.
In: Burdik C., Navratil O., Posta  S. (eds),
XI International Conference Symmetry Methods in Physics,
 Joint Institute for Nuclear Research, Dubna, (2004)
\texttt{arXiv:nlin/0408023}

\bibitem{Dimakis2004b}
Dimakis, A., M\"uller-Hoissen, F.:
Explorations of the extended ncKP hierarchy.
J. Phys. A: Math. Gen. {\bf 37}, 10899--10930, (2004)

\bibitem{Doktorov-Shchesnovich-1995}
Doktorov, E.V., Shchesnovich, V.S.:
Nonlinear evolutions with singular dispersion laws associated with a quadratic bundle,
Phys.~Lett.~A, 207 (1995) 153-158.

\bibitem{Grinevich-Taimanov-2008}
Grinevich, P.G., Taimanov, I.A.: Spectral conservation laws for periodic nonlinear equations of the
 Melnikov type. Amer. Math. Soc. Transl. {\bf 224(2)}, 125--138 (2008) arXiv:0801.4143

\bibitem{Hirota2004}
Hirota. R.:
The direct method in soliton theory.
Cambridge University Press, Cambridge (2004)

\bibitem{Hu2006}
Hu, X.B., Wang, H.Y.:
Construction of dKP and BKP equations with self-consistent sources.
Inverse Probl. {\bf 22} 1903--1920 (2006)

\bibitem{Hu2007}
Hu, X.B., Wang, H.Y.:
New type of Kadomtsev-Petviashvili equation with self-consistent sources and its bilinear B\"acklund transformation.
Inverse Probl. {\bf 23} 1433--1444 (2007)

\bibitem{Huang2011}
Huang, Y.H., Liu, X.J., Yao, Y.Q., Zeng, Y.B.:
A new extended matrix KP hierarchy and its solutions.
Theor. Math. Phys. {\bf 167}, 590--605 (2011)

\bibitem{Kamata2002}
Kamata, M., Nakamula, A.:
Riemann Liouville integrals of fractional order and extended KP hierarchy.
J. Phys. A: Math. Gen. {\bf 35}, 9657--9670, (2002)

\bibitem{Konopelchenko1991}
Konopelchenko, B.G., Strampp, W.:
The AKNS hierarchy as symmetry constraint of the KP hierarchy.
Inverse Probl. {\bf 7}, L17--L24 (1991)

\bibitem{Konopelchenko1992}
Konopelchenko, B.G., Strampp, W.:
New reductions of the Kadomtsev-Petviashvili and two-dimensional Toda lattice hierarchies via symmetry constraints.
J. Math. Phys. {\bf 33}, 3676-3686 (1992)


\bibitem{Li2010}
Li, C.Z., He, J.S., Wu, K., Cheng, Y.:
Tau function and Hirota bilinear equations for the extended bigraded Toda Hierarchy.
J. Math. Phys. {\bf 50}, 043514 (2010)

\bibitem{Lin2001}
Lin, R.L., Zeng, Y.B., Ma, W.-X.:
Solving the KdV hierarchy with self-consistent sources by inverse scattering method.
Physica A {\bf 291}, 287--298 (2001)

\bibitem{Lin2006}
Lin, R.L., Yao, H.S., Zeng, Y.B.:
Restricted flows and the soliton equation with self-consistent sources,
SIGMA {\bf 2}, 096 (2006)

\bibitem{Lin2008}
Lin, R.L., Liu, X.J., Zeng, Y.B.:
A new extended q-deformed KP hierarchy.
J. Nonlinear Math. Phys. {\bf 15}, 333--347 (2008)

\bibitem{Lin2010}
Lin, R.L., Peng, H., Manas, M.:
The q-deformed mKP hierarchy with self-consistent sources, Wronskian solutions and solitons.
J. Phys. A: Math. Theor. {\bf 43} 434022 (2010)

\bibitem{Liu2008}
Liu, X.J., Zeng, Y.B., Lin, R.L.:
An extended two-dimensional Toda lattice hierarchy and two-dimensional Toda lattice with self-consistent sources.
J. Math. Phys. {\bf 49}, 093506 (2008)

\bibitem {Liu2008a}
Liu, X.J., Zeng, Y.B., Lin, R.L.: A new extended KP hierarchy.
Phys. Lett. A {\bf 372}, 3819--3823 (2008)

\bibitem{Liu2009}
Liu, X.J., Lin, R.L., Jin, B., Zeng, Y.B.:
A generalized dressing approach for solving the extended KP and the extended mKP hierarchy.
J. Math. Phys. {\bf 50} 053506 (2009)

\bibitem{Liu2010}
Liu, X.J., Gao, C.:
Generalized dressing method for the extended two-dimensional Toda lattice hierarchy and its reductions.
Sci. China Math. {\bf 54} 365--380 (2010)

\bibitem{LorisWillox1997}
Loris, I., Willox, R.: Bilinear form and solutions of the $k$-constrained Kadomtsev-Petviashvili hierarchy,
Inverse Probl.  {\bf 13}, 411--420 (1997)

\bibitem{Ma2010}
Ma, W.-X.:
An extended Harry Dym hierarchy.
J. Phys. A: Math. Theor. {\bf 43}, 165202 (2010)

\bibitem{Melnikov1983}
Mel'nikov, V.K.:
On Equations for wave interactions.
Lett. Math. Phys. {\bf 7}, 129--136 (1983)

\bibitem{Melnikov1984}
Mel'nikov, V.K.:
Some new nonlinear evolution equations integrable by the inverse problem method.
MATH USSR SB+ {\bf  49}, 461--489 (1984)

\bibitem{Melnikov1987}
Mel'nikov, V.K.:
A direct method for deriving a multi-soliton solution for the problem of interaction of waves on the x-y plane.
Commun. Math. Phys. {\bf 112}, 639--652 (1987)

\bibitem{Melnikov1989}
Mel'nikov, V.K.:
Interaction of solitary waves in the system described by the Kadomtsev-Petviashvili equation with a self-consistent source.
Commun. Math. Phys. {\bf 126} 201--215 (1989)


\bibitem{Milanov2007}
Milanov, T.E.:
Hirota quadratic equations for the extended Toda hierarchy.
Duke Math. J. {\bf 138}, 161--178 (2007)

\bibitem{Oevel1993}
Oevel, W.:
Darboux theorems and Wronskian formulas for integrable systems I. Constrained KP flows.
Physica A {\bf 195}, 533--576 (1993)

\bibitem{Oevel1993a}
Oevel, W., Strampp, W.:
Constrained KP hierarchy and Bi-Hamiltonian structures.
Commun. Math. Phys. {\bf 157}, 51--81 (1993)

\bibitem{Oevel1994}
Oevel, W., Schief, W.:
Squared eigenfunctions of the (modified) KP hierarchy and scattering problems of Loewner type.
Rev. Math. Phys. {\bf 6}, 1301-1338  (1994)

\bibitem{ShenTu2011}
Shen. H.F., Tu, M.H.:
On the constrained B-type Kadomtsev-Petviashvili
hierarchy: Hirota bilinear equations and Virasoro symmetry.
J. Math. Phys. {\bf 52}, 032704 (2011)

\bibitem{Takasaki2010}
Takasaki, K.:
Two extensions of 1D Toda hierarchy.
J. Phys. A: Math. Theor. {\bf 43}, 434032 (2010)

\bibitem{Wang2007}
Wang, H.Y., Hu, X.B., Gegenhasi:
2D Toda lattice equation with self-consistent sources: Casoratian type solutions,
 bilinear B\"{a}cklund transformation and Lax pair.
J. Compu. Appl. Math. {\bf 202}, 133--143 (2007)


\bibitem{Xiong1992}
Xiong, C.S.: The generalized KP hierarchy. Lett. Math. Phys.
{\bf 36}, 223--229 (1996) \texttt{arXiv:hep-th/9211098}

\bibitem{Yao2009}
Yao, Y.Q., Liu, X.J., Zeng, Y.B.:
A new extended discrete KP hierarchy and a generalized dressing method.
J. Phys. A: Math. Theor. {\bf 42} 454026 (2009)


\bibitem{Zeng2000}
Zeng, Y.B., Ma, W.-X.,  Lin, R.L.:
Integration of the soliton hierarchy with self-consistent sources.
J. Math. Phys. {\bf 41}, 5453--5489 (2000)

\bibitem{Zhang2003}
Zhang, D.J., Chen, D.Y.:
The $N$-soliton solutions of the sine-Gordon equation with self-consistent sources.
Physica A {\bf 321}, 467--481 (2003)


\end{thebibliography}

\end{document}